\def\bc{{\boldsymbol c}}
\def\bk{{\boldsymbol k}}
\def\bx{{\boldsymbol x}}
\def\bS{{\boldsymbol S}}
\def\bv{{\boldsymbol v}}
\newcommand{\R}{\mathbb{R}}
  \tikzset{mylabel/.style  args={at #1 #2  with #3}{
    postaction={decorate,
    decoration={
      markings,
      mark= at position #1
      with  \node [#2] {#3};
 } } } }
\theoremstyle{thmstyleone}%
\newtheorem{thm}{Theorem}
\newtheorem{conj}{Conjecture}%
\theoremstyle{thmstyletwo}%
\newtheorem{exa}{Example}%
\newtheorem{rem}{Remark}%
\theoremstyle{thmstylethree}%
\begin{document}



\title{When algebra twinks system biology: a conjecture on the structure of Gr\"obner bases in complex chemical reaction networks}

\author*[1,2]{\fnm{Paola} \sur{Ferrari}}\email{paola.ferrari@unige.it}

\author[1]{\fnm{Sara} \sur{Sommariva}}\email{sara.sommariva@unige.it}

\author[1,3]{\fnm{Michele} \sur{Piana}}\email{michele.piana@unige.it}

\author[1]{\fnm{Federico} \sur{Benvenuto}}\email{federico.benvenuto@unige.it}

\author[1]{\fnm{Matteo} \sur{Varbaro}}\email{matteo.varbaro@unige.it}

\affil[1]{\orgdiv{Department of Mathematics}, \orgname{University of Genoa}, \orgaddress{\street{Via Dodecaneso 35}, \city{Genoa}, \postcode{16145}, \country{Italy}}}

\affil[2]{\orgdiv{School of Mathematics and Natural Sciences}, \orgname{University of Wuppertal}, \orgaddress{\street{Gaussstrasse 20}, \city{Wuppertal}, \postcode{42119}, \country{Germany}}}

\affil[3]{\orgdiv{Life Science Computational Laboratory}, \orgname{IRCCS Ospedale Policlinico San Martino}, \orgaddress{\street{Largo Rosanna Benzi 10}, \city{Genoa}, \postcode{16132}, \country{Italy}}}

\keywords{Chemical Reaction Networks, Steady States, Multivariate Polynomial Systems, Gr\"obner Bases, Signaling Pathways}

\abstract{
We address the challenge of identifying all real positive steady states in chemical reaction networks (CRNs) governed by mass-action kinetics. Gröbner bases offer an algebraic framework that systematically transforms polynomial equations into simpler forms, facilitating comprehensive enumeration of solutions.
In this work, we propose a conjecture that CRNs with at most pairwise interactions yield Gröbner bases possessing a near-``triangular'' structure under appropriate assumptions, and we establish this triangularity rigorously under a strengthened set of hypotheses. We illustrate the phenomenon using examples from a gene regulatory network and the Wnt signaling pathway, where the Gröbner basis approach reliably captures all real positive solutions. Our computational experiments reveal the potential of Gröbner bases to overcome limitations of local numerical methods for finding the steady states of complex biological systems, making them a powerful tool for understanding dynamical processes across diverse biochemical models.
}

\maketitle

\section*{Introduction}

The determination of nonnegative steady states in large systems of quadratic equations is a fundamental problem in various fields, including systems biology, chemical engineering, and applied mathematics. These systems often arise from the modeling of complex chemical reaction networks (CRNs), where the concentrations of species evolve according to quadratic ordinary differential equations (ODEs) derived from the law of mass action \cite{Feinberg1987,Voit2013}.

Traditional numerical methods for solving these nonlinear systems include iterative approaches like Newton's method and gradient descent \cite{Berra2024404}. While these methods can be effective when a good initial approximation is available, they are inherently local and may fail to converge to a solution if the starting point is not sufficiently close to it. Moreover, these methods typically find a single solution, potentially missing other relevant solutions, especially in the context of multiple steady states that can exist in biological systems due to bistability or multistability phenomena \cite{Angeli2007}.

In the context of CRNs, particularly those modeling cell signaling pathways involved in diseases like cancer, finding all possible steady states is essential. For example, cells may become cancerous depending on specific genetic mutations, involving hundreds of proteins and reactions \cite{Sommariva2023}. 
Accurately determining all steady states allows researchers to understand the different potential behaviors of the system under various conditions, which is critical for developing targeted therapies.

To overcome the limitations of iterative methods, we adopt an algebraic approach based on Gr\"obner bases \cite{Cox2015,Buchberger2006,Bardet201549,KR2000}. Gr\"obner bases provide a powerful tool for solving systems of polynomial equations by transforming them into a simpler equivalent system, from which all solutions can be systematically derived. 
This method does not require initial guesses and can determine the exact number of solutions, including all real positive solutions relevant for concentrations in CRNs.

{Typically, applying this method to systems of equations generated by reaction networks involving at most pairwise interactions - which lead to quadratic algebraic systems - yields Gr\"obner bases with a distinctive structure: the associated ideal contains a univariate polynomial whose degree does not exceed the system size, and the remaining generators form an almost triangular basis.
In this paper, we hypothesize that this phenomenon is not accidental and we propose a conjecture to explain this observation, which we prove under a strengthened set of hypotheses. In particular, we require a condition on the Jacobian of the vector field defined by the independent mass–action balances and the conservation laws, a test that can be computationally onerous for large networks. Lighter and readily applicable conditions under which the triangular property holds true have to be discussed and further analyzed.

The first application of Gr\"obner bases for computing steady-state concentrations in a simple enzyme channeling model dates back to the 1990s \cite{bayram1997novel}. 
This study shows the conjectured phenomenon, yet the work was not further developed despite significant advancements in computational capabilities over the years.
Nevertheless, the interest in using Gr\"obner bases in systems biology has persisted in recent literature, and some researchers have revisited their use to compute steady-state ideals and conducted intriguing studies to drastically improve computation speed by assigning a specific variable order
\cite{sadeghimanesh2019groebner,feliu2024toricity}.
}

All Macaulay2 scripts that compute the Gröbner bases used to test our conjecture are available in a companion GitHub repository \cite{M2Repository}.  
We checked the conjecture on many reaction-network examples: an ERK pathway \cite{Loman2023}, a minimal oscillating subnetwork model of the MAPK cascade \cite{Hadac2017}, TGF-$\beta$ receptors \cite{SAMAL20163}, and toy models \cite{Berra2024404}, which are all stored in the repository. The two models analysed in detail here, namely a seven-species gene-regulatory network \cite{Conradi2017} and a nineteen-species Wnt signalling pathway \cite{Gross201621}, are representative excerpts from this larger test bed.

The paper is organized as follows. In Section \ref{sec:algorithm}, we present our conjecture on the Gr\"obner basis structure for polynomial systems arising from CRNs, outlining the necessary preliminaries in Subsection \ref{ssec:preliminaries}, introducing the problem, our conjecture and a proof under stronger assumptions in Subsection \ref{ssec:conjecture}, and describing a possible step-by-step procedure for determining all real positive system solutions in Subsection \ref{ssec:algorithm}. In Section \ref{sec:gene_regulatory}, we test this method using a gene regulatory network, for which we derive the steady-state equations, compute a Gr\"obner basis, and analyze the number of positive solutions as the parameters change. Section \ref{sec:wnt} applies the same approach to the Wnt signaling pathway, illustrating how to handle larger, more complex reaction networks. Finally, Section \ref{sec:conclusions} offers our conclusions and highlights potential directions for future investigation.

\section{Algorithm for Finding All Real Positive Solutions}\label{sec:algorithm}

In this section, we present an algorithmic approach to determine all real positive solutions of polynomial systems arising from CRNs governed by the law of mass action. The algorithm leverages Gr\"obner bases to transform the system into a form that is more suitable for systematic solution. We begin by outlining the assumptions and presenting a conjecture that underpins the algorithm. We then detail the steps of the algorithm, providing insights into its implementation.

\subsection{Preliminaries}\label{ssec:preliminaries}

Consider a CRN with $n$ chemical species whose concentrations $x_1,\dots,x_n$ follow mass-action kinetics.  A steady state is normally described by $n$ polynomial balance equations, but exactly $p$ of those equations are linearly dependent because of $p$ independent conservation laws.  Replacing the $p$ redundant balances by these $p$ linear conservation relations gives a square system of $n$ algebraically independent polynomials
\begin{equation}\label{eq:steady_state_eq}
  f_i(x_1,\dots,x_n)=0,\qquad i=1,\dots,n,
\end{equation}
where $f_1,\dots,f_{\,n-p}$ are net production rates for $n-p$ species and $f_{\,n-p+1},\dots,f_n$ enforce the conservation laws.  The remainder of this section builds the algebraic framework in which system \eqref{eq:steady_state_eq} will be analysed.

Let $R=\mathbb{R}[x_1,\ldots ,x_n]$ and $S=\mathbb{C}[x_1,\ldots ,x_n]$ be polynomial rings in $n$ variables, respectively, over the real and complex field. For $f_1,\ldots ,f_r$ polynomials of $R$, we denote $I\subseteq R$ the ideal they generate in $R$ and $IS$ the ideal they generate in $S$. The zero-loci of the set of polynomials $f_1,\ldots ,f_r$ in $\mathbb{R}^n$ and $\mathbb{C}^n$, since they only depend by the ideal they generate, will be denoted by $\mathcal{Z}(I)\subseteq \mathbb{R}^n$ and $\mathcal{Z}(IS)\subseteq \mathbb{C}^n$. 

The ideal $I$ is called {\bf $0$-dimensional} if the $\mathbb{R}$-vector space $S/I$ is non-zero and has finite dimension. This is equivalent to ask for the zero-locus $\mathcal{Z}(IS)$ being finite and non-empty.
It turns out that a zero-dimensional ideal is {\bf radical} if and only if $|\mathcal{Z}(IS)|=\dim_{\mathbb{R}}R/I$.
As a last piece of notation, we say that a zero-dimensional ideal $I$ is {\bf in a normal form} if, for all $p=(p_1,\ldots ,p_n), q=(q_1,\ldots ,q_n)\in\mathcal{Z}(IS)$ such that $p\neq q$, we have $p_1\neq q_1$. We have (e.g. see Theorem 3.7.25 in \cite{KR2000}):

\begin{thm}\label{t:shape}
With the above notation, the following are equivalent:
\begin{enumerate}
    \item $I$ is a $0$-dimensional radical ideal in normal form.
    \item The reduced Gr\"obner basis of $I$ w.r.t. the lexicographical monomial order extending the linear order of the variables $x_n>x_{n-1}>\ldots >x_1$ is of the form 
    \[x_n-g_n, \ \ldots , \ x_2-g_2, \ g_1\] 
    where $g_i\in\mathbb{R}[x_1]$ for all $i=1,\ldots ,n$ and $g_1$ has distinct roots in $\mathbb{C}$.
\end{enumerate}
\end{thm}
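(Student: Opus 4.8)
The plan is to prove the two implications separately, with essentially all of the work sitting in $(1)\Rightarrow(2)$ — the ``shape lemma'' direction. For $(2)\Rightarrow(1)$, suppose the reduced Gröbner basis has the stated shape and set $d=\deg g_1\ge 1$. Its leading terms are $x_n,\dots,x_2,x_1^{\,d}$, so the standard monomials are $1,x_1,\dots,x_1^{\,d-1}$ and hence $\dim_{\mathbb R}R/I=d<\infty$, i.e. $I$ is $0$-dimensional. Substituting $x_i\mapsto g_i(x_1)$ shows that a point lies in $\mathcal Z(IS)$ exactly when it has the form $(\alpha,g_2(\alpha),\dots,g_n(\alpha))$ with $g_1(\alpha)=0$; since $g_1$ has $d$ distinct complex roots this set has exactly $d$ elements. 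Thus $|\mathcal Z(IS)|=d=\dim_{\mathbb R}R/I$, so $I$ is radical by the criterion recalled above, and two distinct roots $\alpha\ne\alpha'$ give points differing already in the first coordinate, i.e. $I$ is in normal form.

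For $(1)\Rightarrow(2)$, put $d:=|\mathcal Z(IS)|$; radicality gives $d=\dim_{\mathbb R}R/I$, and $d\ge 1$ by $0$-dimensionality. Write $\mathcal Z(IS)=\{P^{(1)},\dots,P^{(d)}\}$, $P^{(j)}=(p^{(j)}_1,\dots,p^{(j)}_n)$; the normal-form hypothesis says the scalars $p^{(1)}_1,\dots,p^{(d)}_1$ are pairwise distinct. The crucial step is a linear-algebra observation: no nonzero $h\in\mathbb R[x_1]$ of degree $<d$ can lie in $I$, since it would vanish at the $d$ distinct points $p^{(j)}_1$; hence $1,x_1,\dots,x_1^{\,d-1}$ are $\mathbb R$-linearly independent in $R/I$ and, because $\dim_{\mathbb R}R/I=d$, they form a basis. (Conceptually this is the assertion that evaluation induces $S/IS\xrightarrow{\ \sim\ }\mathbb C^{d}$ and that the Vandermonde matrix at the distinct nodes $p^{(j)}_1$ is invertible, so $x_1$ generates this product algebra; one must only check that the descent from $S/IS$ to $R/I$ is legitimate, which it is because $S=R\otimes_{\mathbb R}\mathbb C$ and $I$ is generated by elements of $R$.) Consequently, for each $i\ge 2$ the class $\bar x_i$ equals $g_i(\bar x_1)$ for a unique $g_i\in\mathbb R[x_1]$ with $\deg g_i<d$, so $x_i-g_i\in I$; and $\bar x_1^{\,d}$ is an $\mathbb R$-combination of lower powers, yielding a monic $g_1\in I\cap\mathbb R[x_1]$ of degree $d$. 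Since $g_1\in I$ vanishes on $\mathcal Z(IS)$ it vanishes at the $d$ distinct $p^{(j)}_1$, and being monic of degree $d$ it therefore equals $\prod_{j}(x_1-p^{(j)}_1)$; in particular it has distinct roots.

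It remains to verify that
\[
  G=\{\,x_n-g_n,\ \dots,\ x_2-g_2,\ g_1\,\}
\]
is the reduced Gröbner basis for the lex order with $x_n>\dots>x_1$. Its leading terms are $x_n,\dots,x_2,x_1^{\,d}$, so $R/(\mathrm{LT}(G))$ has $\mathbb R$-basis $1,x_1,\dots,x_1^{\,d-1}$ and dimension $d$. Since $G\subseteq I$ we have $(\mathrm{LT}(G))\subseteq\mathrm{LT}(I)$, while $\dim_{\mathbb R}R/\mathrm{LT}(I)=\dim_{\mathbb R}R/I=d$ (both admit the standard monomials as basis); a surjection between $\mathbb R$-spaces of the same finite dimension is an isomorphism, so $\mathrm{LT}(I)=(\mathrm{LT}(G))$ and $G$ is a Gröbner basis. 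Finally every element of $G$ has leading coefficient $1$, and no non-leading monomial occurring in an element of $G$ — these are the powers $x_1^{k}$ with $k<d$ — is divisible by any of the leading monomials $x_2,\dots,x_n,x_1^{\,d}$, so $G$ is reduced, hence by uniqueness it is the reduced Gröbner basis of $I$.

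The main obstacle is the linear-algebra core of $(1)\Rightarrow(2)$: identifying $1,x_1,\dots,x_1^{\,d-1}$ as a basis of $R/I$, which is precisely where the normal-form hypothesis enters and which merits the careful Vandermonde / Chinese-remainder justification sketched above; once it is in place, producing the $g_i$ and checking the Gröbner and reducedness conditions are routine. Two minor points deserve care: the degenerate case $\mathcal Z(IS)=\varnothing$ is excluded by the definition of $0$-dimensionality (forcing $d\ge 1$), and the compatibility of the real and complex settings — $S=R\otimes_{\mathbb R}\mathbb C$, $IS=I\otimes_{\mathbb R}\mathbb C$, and equality of the corresponding vector-space dimensions — is used implicitly whenever we pass between $R/I$ and $S/IS$.
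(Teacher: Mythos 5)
Your proof is correct. Note that the paper does not actually prove Theorem~\ref{t:shape}; it simply cites it as the Shape Lemma (Theorem~3.7.25 in \cite{KR2000}), and your argument is a sound, self-contained version of the standard proof of that result: the $(2)\Rightarrow(1)$ direction by counting standard monomials and parametrizing $\mathcal{Z}(IS)$, and the $(1)\Rightarrow(2)$ direction by using the normal-form hypothesis to show $1,x_1,\dots,x_1^{d-1}$ is a basis of $R/I$, extracting the $g_i$, and confirming the Gr\"obner and reducedness conditions by the dimension comparison $\dim_{\mathbb R}R/(\mathrm{LT}(G))=\dim_{\mathbb R}R/\mathrm{LT}(I)$. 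The points you flag as needing care (the exclusion of $\mathcal{Z}(IS)=\varnothing$ and the base change $S=R\otimes_{\mathbb R}\mathbb C$) are indeed the right ones, and you handle them adequately.
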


\begin{rem}
With the notation of the theorem, the fact that $g_1$ has distinct roots in $\mathbb{C}$ is equivalent to say that the degree og $g_1$ is equal to the number of solutions of the polynomial system of equations $f_1=\ldots =f_r=0$ over $\mathbb{C}$, that is $\deg(f_1)=|\mathcal{Z}(IS)|$. This is also equivalent to say that $g_1$ and its derivative have no common factor.
\end{rem}

\smallskip

If the ideal $I=(f_1,\ldots ,f_r)\subseteq R$ is $0$-dimensional we must have $r\geq n$. By Theorem \ref{t:shape}, a zero-dimensional {\it radical} ideal can indeed be generated by $n$ elements. This drastically fails dropping the ``radical assumption'', as shown by the ideal $I=(x_1^{r-1},x_1^{r-2}x_2,\ldots ,x_2^{r-1})\subseteq \mathbb{R}[x_1,x_2]$, that is a $0$-dimensional ideal in 2 variables but, as it is easy to check, cannot be generated by less than $r$ elements. 

In our situation, the system of polynomial equations already consists of $n$ elements, namely we start from $f_1,\ldots ,f_n$ polynomials of $R=\mathbb{R}[x_1,\ldots ,x_n]$. Unfortunately, this is not enough to assure we are in the nice situation of Theorem \ref{t:shape}:

\begin{exa}
If $n=3$ consider the following systems of 3 polynomials:
\begin{align*}
    A=\begin{cases}
    x_3+x_2x_1+x_1^2=0 \\
    -x_3+x_2^2+x_2x_1=0 \\
    x_2+x_1+1=0
    \end{cases}, \ \ \ B=\begin{cases}
    x_3x_2+x_3+x_1^2=0 \\
    x_3x_1+x_2^2+x_2=0 \\
    x_2+x_1+1=0
    \end{cases} \\
    C=\begin{cases}
    x_3^2+2x_3x_1-2x_3+2x_2^2=0 \\
    x_3x_1-x_3+x_2^2=0 \\
    x_1-1=0
    \end{cases}, \ \ \ D=\begin{cases}
    x_3-x_1-2=0 \\
    x_2^2+x_2x_1-x_2=0 \\
    x_1^2+x_1=0
    \end{cases}
\end{align*}
One can check that:
\begin{enumerate}
    \item the system $A$ does not admit solutions, i.e., if $I\subseteq \mathbb{R}[x_1,x_2,x_3]$ is the ideal generated by the three polynomials in the system, $R/I=0$ (equivalently $I=R$);
    \item the system $B$ admit infinite solutions, so the ideal $I\subseteq \mathbb{R}[x_1,x_2,x_3]$ generated by the 3 polynomials in the system, even if proper, is not zero-dimensional;
    \item the ideal $I\subseteq \mathbb{R}[x_1,x_2,x_3]$ generated by the 3 polynomials in the system $C$ is 0-dimensional, but not radical;
    \item the ideal $I\subseteq \mathbb{R}[x_1,x_2,x_3]$ generated by the 3 polynomials in the system $D$ is 0-dimensional and radical, but not in a normal form.
\end{enumerate}
Therefore, all the systems of polynomial equations above do not fall, for different reasons, in the framework of Theorem \ref{t:shape}.
\end{exa}

\subsection{The problem and the conjecture}\label{ssec:conjecture}

Throughout this section we focus on the polynomial systems defining the steady-state ideal of chemical reaction networks that satisfy the assumptions listed below.
\begin{enumerate}
    \item Mass-action kinetics: the CRN follows the law of mass action, so reaction rates are proportional to the product of reactant concentrations raised to their stoichiometric coefficients (see \cite[Subsection 2.1.2]{MR3890056}).
    \item Inclusion of Conservation Laws: any conservation relations (e.g., total mass or charge conservation) are included in the system as additional equations or incorporated into the existing equations.
    \item Absence of independent subnetworks: the reaction set cannot be partitioned into two non-trivial subnetworks whose stoichiometric subspaces form a direct sum; equivalently, the global stoichiometric subspace is indecomposable (see \cite[Appendix~6.A]{MR3890056}).
\end{enumerate}

We would like to achieve an algebraic interpretation of the above assumptions that allows us to characterize the CRNs satisfying Hypothesis 1 of Theorem \ref{t:shape}. Using the equivalence in Theorem \ref{t:shape}, we would have a neat algorithm to finding the real positive solutions of the given CRN, as described in \ref{ssec:algorithm}. According to the numerical experiments we performed, it is not straightforward to find a CRN satisfying the three assumptions above that does dot fulfil hypothesis 1. Extensive numerical searches (see \cite{M2Repository}) have yet to reveal a CRN that meets Assumptions 1–3 but violates Hypothesis 1. Motivated by this empirical evidence, we state the following conjecture.

\begin{conj}\label{conj}
Let the steady–state ideal of a CRN satisfy Assumptions 1–3 listed above.  
Then there exists a permutation $\sigma\in S_n$ such that, after re-labelling the variables 
\[
(x_{\sigma(1)},x_{\sigma(2)},\dots,x_{\sigma(n)}) ,
\]
the reduced Gr\"obner basis of the ideal, computed with respect to the lexicographic order
\[
x_{\sigma(n)} > x_{\sigma(n-1)} > \dots > x_{\sigma(2)} > x_{\sigma(1)},
\]
has the triangular “shape”
\[
x_{\sigma(n)} - g_n(x_{\sigma(1)}),\;
x_{\sigma(n-1)} - g_{n-1}(x_{\sigma(1)}),\;
\dots,\;
x_{\sigma(2)} - g_2(x_{\sigma(1)}),\;
g_1(x_{\sigma(1)}),
\]
where $g_1\in\mathbb{K}[x_{\sigma(1)}]$ and, for every $j\ge 2$, 
$g_j\in\mathbb{K}[x_{\sigma(1)}]$ as well (here $\mathbb{K}$ is the extension field of $\mathbb{Q}$ with the coefficients of the equations in the system).  In particular, the first polynomial is
univariate in $x_{\sigma(1)}$, and each remaining polynomial is linear in the
corresponding variable with coefficients that are polynomials in $x_{\sigma(1)}$.
\end{conj}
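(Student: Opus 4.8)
The plan is to prove the conjecture by producing, for a suitable permutation $\sigma\in S_n$, the three properties of condition (1) of Theorem \ref{t:shape}: that the steady-state ideal $I$ is $0$-dimensional, radical, and in normal form with respect to the chosen variable $x_{\sigma(1)}$ (i.e. the $\sigma(1)$-coordinate is injective on $\mathcal{Z}(IS)$). Once these hold, the triangular shape of the conjecture is \emph{exactly} the reduced lexicographic Gröbner basis delivered by the equivalence in Theorem \ref{t:shape}, so nothing further about $S/I$ need be computed. The entire argument therefore reduces to extracting these three structural facts about $\mathcal{Z}(IS)$ from Assumptions 1--3 alone, and I would treat them in the order: zero-dimensionality, radicality, separating coordinate. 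Crucially, I would add no new hypothesis; any Jacobian information has to be \emph{derived} from 1--3 rather than assumed.

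First I would establish $0$-dimensionality. The $p$ conservation laws among $f_{n-p+1},\dots,f_n$ are linear and independent (Assumption 2), so they confine $\mathcal{Z}(IS)$ to an affine subspace of dimension $n-p$; inside it the $n-p$ genuine balances $f_1,\dots,f_{n-p}$ must cut out a finite set. Under mass-action kinetics (Assumption 1) these balances are net production rates with a controlled monomial support, so I would argue finiteness by a Newton-polytope / mixed-volume count for the square system $(f_1,\dots,f_n)$, checking that the linear conservation relations collapse every fiber of the remaining balances to dimension zero. This step is standard in spirit but must be made to depend only on the combinatorial data of the network.

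Radicality is the first genuinely delicate point, and here the Jacobian enters as a \emph{tool}, not as an assumption. The plan is to show that, under Assumptions 1--3, the Jacobian of $(f_1,\dots,f_n)$ is nonsingular at every point of $\mathcal{Z}(IS)$, so that each solution is a reduced point and $I$ coincides with its radical. The structural input is that quadratic balances coming from at most pairwise interactions, intersected with the linear compatibility class, cannot develop a tangency along an indecomposable stoichiometric subspace; I would phrase this as a transversality statement between the variety of the balances and the compatibility class, and invoke Assumption 3 to exclude the degenerate configurations in which these two meet non-transversally. Turning this intuition into an unconditional nonsingularity proof is already substantial.

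The hard part will be the separating-coordinate (normal-form) step, and this is where Assumption 3 must do real work. Shape position for $x_{\sigma(1)}$ is equivalent to the image of $x_{\sigma(1)}$ generating the reduced algebra $S/IS$, i.e. to the coordinate $x_{\sigma(1)}$ taking pairwise distinct values on the finite set $\mathcal{Z}(IS)$. A generic linear form is always such a separator, but the conjecture insists on one of the $n$ coordinate axes, and mere finiteness is insufficient: three points such as $(0,0),(1,0),(0,1)$ are separated by no coordinate yet form no product. I would therefore exploit the explicit monomial structure of the mass-action equations rather than cardinality alone. The strategy is to prove that the only obstruction to some coordinate being a separator is a Cartesian splitting $\mathcal{Z}(IS)=V'\times V''$ across complementary blocks of variables, that such a splitting forces the stoichiometric subspace to decompose as a direct sum, and that this is precisely what indecomposability (Assumption 3) forbids; the connectivity of the reaction graph would then propagate the separation achieved on one coordinate to a single distinguished variable, taken to be smallest in the lexicographic order. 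Making this chain rigorous --- converting ``no coordinate separates'' into an honest stoichiometric splitting, and ruling out the non-product yet non-separating configurations exemplified above --- is the central obstacle, and it is the reason the statement currently stands as a conjecture rather than a theorem.
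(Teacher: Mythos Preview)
This statement is presented in the paper as a \emph{conjecture}, not a theorem, and the paper does not prove it. What the paper does establish is the related Theorem~\ref{prop:hyp1}, which assumes \emph{in addition} that the Jacobian of $\mathbf f$ is nonsingular at every steady state; under that extra hypothesis the ideal is shown to be zero-dimensional and radical, and to be in normal form after a \emph{generic linear} change of variables---not merely a permutation of the original coordinates. So even the paper's strengthened result falls short of the conjecture as stated.

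Your proposal is an honest outline rather than a proof, and you say so yourself in the last sentence. The reduction to the three conditions of Theorem~\ref{t:shape} is exactly the right framework, and the two obstacles you isolate---deriving Jacobian nonsingularity (hence radicality) from Assumptions~1--3 alone, and producing a separating \emph{coordinate} rather than a separating generic linear form---are precisely the gaps that keep the statement conjectural in the paper as well. On the first, note that mass-action systems can genuinely have positive-dimensional or non-reduced steady-state loci for special parameter choices, so any unconditional argument must at minimum be generic in the rate constants; Assumptions~1--3 as written do not exclude such degenerations, and your Newton-polytope sketch does not control this. On the second, your own three-point example already shows that ``no coordinate separates'' does not force a Cartesian product structure on $\mathcal Z(IS)$, so the proposed bridge from non-separation to a stoichiometric direct-sum splitting is not valid as stated and would require a substantially finer structural argument than indecomposability of the reaction graph alone can supply. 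In short: your plan matches the paper's framing, but neither you nor the paper closes the gap, and the statement remains open.
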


This conjecture implies that the Gr\"obner basis transforms the system into a triangular form, facilitating a sequential solution process starting from $x_{\sigma(1)}$.

Having stated the conjecture, we now present a sufficient algebraic framework for a proof.  We work with a concentration vector $\bx=(x_1,\dots,x_n)^{\!\top}\in\R_{>0}^n$ and a positive rate-constant vector $\bk=(k_1,\dots,k_m)^{\!\top}$, write the mass-action rate vector as $\bv(\bx,\bk)=(v_1,\dots,v_m)^{\!\top}$ whose entries are monomials in the $x_i$, and let $\bS$ be a row basis of the stoichiometric subspace. We have
\[\bS\,\bv(\bx,\bk)=\mathbf 0.\]

Conservation relations correspond to vectors $\boldsymbol{\gamma}\in\ker(\mathbf S^{\mathsf T})$.  Following \cite{sommariva2021}, choose linearly independent, component-wise non-negative generators $\{\boldsymbol{\gamma}_1,\dots,\boldsymbol{\gamma}_p\}$ of the convex cone
$\{\boldsymbol{\gamma}\in\ker(\mathbf S^{\mathsf T})\mid \boldsymbol{\gamma}\ge0\}$.
Collecting these rows defines
$$
\mathbf N\;=\;
\begin{bmatrix}
\boldsymbol{\gamma}_1^{\!\mathsf T}\\
\vdots\\
\boldsymbol{\gamma}_p^{\!\mathsf T}
\end{bmatrix}
\;\in\;\mathbb R^{p\times n}.
$$

%

Together with the $(n-p)$ independent mass–action balances we obtain the $n$–component vector field
$$
\mathbf f(\bx)\;=\;
\begin{bmatrix}
\mathbf S\,\bv(\bx,\bk)\\[2pt]
\mathbf N\bx-\bc
\end{bmatrix},
$$
whose first $n-p$ components are the independent reaction–rate equations and whose last $p$ components are the conservation laws. A steady state is therefore a common zero of $\mathbf f$, i.e. a solution of \eqref{eq:steady_state_eq}.

\begin{thm}\label{prop:hyp1}
Let $I\;=\;\bigl\langle\mathbf f(\bx)\bigr\rangle\subset\mathbb{K}[\bx]$ be the steady–state ideal of a CRN. 
Assume that
\[
\det\!(J)\;\neq\;0, \qquad J = \nabla_{\!\bx}\mathbf f(\bx^*),
\]
for every zero $\bx^*$. Then $I$
is zero–dimensional, radical, and, for almost all changes of variables, in {normal form}.  Consequently $I$ (up to a possibly changing of variables) satisfies Hypothesis~1 of Theorem~\textup{\ref{t:shape}}.
\end{thm}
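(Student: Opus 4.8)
The plan is to verify the three asserted properties of $I$ one at a time and then quote Theorem~\ref{t:shape}. Throughout, ``every zero'' means every point of $\mathcal{Z}(IS)\subseteq\mathbb{C}^n$, and I take $\mathcal{Z}(IS)\neq\emptyset$ as an input (for a CRN this holds whenever the stoichiometric compatibility classes are compact, by a Brouwer fixed‑point argument; without it $I=R$ and the statement is vacuous). \emph{Zero‑dimensionality.} By the inverse function theorem — equivalently, $\mathbf f$ being \'etale at $\bx^*$ because $\det J(\bx^*)\neq0$ — each zero $\bx^*$ is isolated in $\mathcal{Z}(IS)$. If every point of the Zariski‑closed set $\mathcal{Z}(IS)$ is isolated in the analytic topology, then each of its finitely many (by Noetherianity) irreducible components is a single point, so $\mathcal{Z}(IS)$ is finite; together with non‑emptiness this is exactly $0$‑dimensionality of $I$.

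\emph{Radicality.} Once $I$ is $0$‑dimensional, the Chinese Remainder Theorem writes $S/IS\cong\prod_{\bx^*}\mathcal{O}_{\bx^*}$ as a finite product of local Artinian $\mathbb{C}$‑algebras, so $\dim_{\mathbb{C}}S/IS=\sum_{\bx^*}\mu_{\bx^*}$ with $\mu_{\bx^*}=\dim_{\mathbb{C}}\mathcal{O}_{\bx^*}$. The heart of the proof is the claim that $\det J(\bx^*)\neq0$ forces $\mu_{\bx^*}=1$: writing $\mathfrak{m}$ for the maximal ideal of $\bx^*$ in $S$ and Taylor‑expanding, the image of $f_i$ in $\mathfrak{m}/\mathfrak{m}^2\cong\mathbb{C}^n$ is the $i$‑th row of $J(\bx^*)$; when $J(\bx^*)$ is invertible these images span, so the cotangent space of $\mathcal{O}_{\bx^*}$ vanishes, hence its maximal ideal is zero by Nakayama and $\mathcal{O}_{\bx^*}\cong\mathbb{C}$. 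Consequently $\dim_{\mathbb{R}}R/I=\dim_{\mathbb{C}}S/IS=|\mathcal{Z}(IS)|$, which by the criterion recalled in Subsection~\ref{ssec:preliminaries} is precisely the assertion that $I$ is radical.

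\emph{Normal form and conclusion.} Since $\mathcal{Z}(IS)$ is finite, for every pair of distinct zeros $p\neq q$ the linear forms $\ell$ that fail to separate them, i.e.\ with $\ell(p-q)=0$, form a hyperplane in coefficient space. As $\mathbb{K}$ is infinite (a nonzero polynomial over $\mathbb{C}$ cannot vanish on all of $\mathbb{K}^n$), I can pick a linear form $\ell$ with coefficients in $\mathbb{K}$ lying outside the finite union of these hyperplanes and complete it to an invertible linear change of variables whose first new coordinate is $\ell$. Such a change is an automorphism of the polynomial ring, hence preserves $0$‑dimensionality and radicality; the transformed ideal is therefore $0$‑dimensional, radical, and — by the choice of $\ell$ — in normal form, so Theorem~\ref{t:shape} delivers Hypothesis~1. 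This proves the theorem up to the stated change of variables.

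The step I expect to be the real obstacle is the multiplicity‑one claim in the second paragraph: the content of the first and third steps is routine, whereas the entire strength of the hypothesis is absorbed by the commutative‑algebra fact ``invertible Jacobian of the chosen generators $\mathbf f$ at a zero $\Rightarrow$ local multiplicity $1$'', and some care is needed to keep the Jacobian genuinely that of the square system $\mathbf f$ (exactly the matrix $J$ in the hypothesis). A second, softer point is the tacit non‑emptiness of $\mathcal{Z}(IS)$: the Jacobian condition is vacuously true when $I=R$, so existence of at least one complex steady state must be imported as a standing hypothesis on the CRN rather than deduced.
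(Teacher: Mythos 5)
Your proof is correct, and its skeleton is the same as the paper's: non-vanishing Jacobian at every complex zero gives finiteness plus reducedness, and a generic linear change of the distinguished variable gives normal form. The difference is that the paper outsources each step to a citation --- the Jacobian Criterion \cite[Theorem~16.19]{MR1322960} for zero-dimensionality, part (b) of the same theorem (regularity of the $0$-dimensional quotient) for radicality, and Proposition~4.2.2 of \cite{MR2363237} for the generic normal position --- whereas you prove all three from scratch: isolatedness of zeros for finiteness, the Nakayama/cotangent-space argument showing each local ring $\mathcal{O}_{\bx^*}$ is $\mathbb{C}$ for radicality, and explicit hyperplane avoidance over the infinite field $\mathbb{K}$ for normal position. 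The ``multiplicity-one'' step you flag as the real obstacle is exactly the content the paper absorbs into the regularity statement of Eisenbud's theorem, so your worry is well placed but the step is sound as you wrote it: the images of the $f_i$ in $\mathfrak{m}/\mathfrak{m}^2$ are the rows of the square Jacobian $J(\bx^*)$ of the chosen generators, invertibility makes them span, and Nakayama kills the maximal ideal of the localization. You are also slightly more careful than the paper on one point: non-emptiness of $\mathcal{Z}(IS)$ is needed for $I$ to be $0$-dimensional in the sense defined in the preliminaries (the Jacobian hypothesis is vacuous when $I=R$), and you correctly import it as a standing assumption rather than deducing it.
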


\begin{proof}

By the Jacobian Criterion \cite[Theorem 16.19]{MR1322960}, for any $\mathbf{x}^* \in \mathcal{V}(I)$:
$$\dim_{\mathbf{x}^*} \mathcal{V}(I) \leq n - \text{rank}(J(\mathbf{x}^*)).$$
Since $\det J(\mathbf{x}^*) \neq 0$, we have $\text{rank}(J(\mathbf{x}^*)) = n$, so $\dim_{\mathbf{x}^*} \mathcal{V}(I) = 0$ at every point.
Therefore $\dim \mathcal{V}(I) = 0$, making $I$ zero-dimensional.

Concerning radicality, since $\det J(\mathbf{x}^*) \neq 0$ at every point and $I$ is zero-dimensional, $\mathbb{K}[\bx]/I$ is a 0-dimensional regular ring by \cite[Theorem 16.19 (b)]{MR1322960}, hence $I$ is radical. 

At this point we can use Proposition 4.2.2 di \cite{MR2363237} to infer that there exists a nonempty Zariski open subset $U\subset \mathbb{K}^{n-1}$ such that for all $\mathbf a =(a_1,\ldots ,a_{n-1})\in U$ a change of variables the of the form:
$$
\varphi_{\mathbf a} :\;x_{\sigma(i)}\mapsto x_{\sigma(i)}\;(i>n),\qquad 
x_{\sigma(1)}\mapsto x_{\sigma(1)}+\sum_{i=2}^{n}a_{i}x_{\sigma(i)},
$$
will put $I$ in normal form.

\end{proof}

For large networks, verifying the condition
$$
\det\!\bigl[\nabla_{\!\bx}\mathbf f(\bx^*)\bigr]\neq0
$$
symbolically may be computationally expensive: the Jacobian is an $n\times n$ polynomial matrix and one must check that the resulting determinant does not vanish on the whole steady-state variety.  Nevertheless the hypothesis is pragmatically plausible for many biochemical models.  First, non-degeneracy is generic in parameter space: the set of rate constants for which the Jacobian drops rank is contained in a proper algebraic subset, so any small perturbation of a ``bad’' parameter point typically restores full rank.  Second, recent works, such as \cite{Biddau2024}, adopt exactly the same requirement, providing further evidence that the condition aligns with the behaviour of many real-world CRNs.

\subsection{Algorithm Description}\label{ssec:algorithm}

The algorithm proceeds as follows:

\begin{enumerate}
    \item \textbf{Compute the Gr\"obner Basis:}
    \begin{itemize}
        \item Compute the reduced Gr\"obner basis $\{g_1, g_2, \dots, g_n\}$ of the original system $\{f_1, f_2, \dots, f_n\}$ using lexicographic ordering with $x_n > x_{n-1} > \dots > x_2 > x_1$.
        \item The computation can be performed using algorithms like Buchberger's algorithm or more efficient variants (e.g., F4 or F5 algorithms).
    \end{itemize}
    
    \item \textbf{Solve the Univariate Polynomial Equation:}
    \begin{itemize}
        \item Solve the univariate polynomial equation $g_1(x_1) = 0$ to find all real positive roots of $x_1$.
        \item Utilize robust numerical methods suitable for univariate polynomials, such as the multiprecision algorithm by Bini and Robol \cite{Bini2014}.
        \item Denote the set of real positive solutions as $\{x_1^{(k)}\}$, where $k$ indexes the solutions.
    \end{itemize}
    
    \item \textbf{Back-Substitution to Find Remaining Variables:}
    \begin{itemize}
        \item For each solution $x_1^{(k)}$, sequentially solve for $x_j$, $j = 2, \dots, n$, using the corresponding $g_j$:
        \begin{equation}
        x_j = g_j(x_1^{(k)}), \quad j = 2, \dots, n,
        \end{equation}
        \item Ensure that each computed $x_j$ is real and positive. If not, discard the corresponding solution.
    \end{itemize}
    
    \item \textbf{Compile the Solutions:}
    \begin{itemize}
        \item Collect all tuples $(x_1^{(k)}, x_2^{(k)}, \dots, x_n^{(k)})$ that satisfy the system and are real and positive.
        \item These tuples represent all the real positive steady-state solutions of the CRN.
    \end{itemize}
\end{enumerate}

\subsection{Limitations and Considerations}\label{ssec:limitations}

While the Gr\"obner basis method offers a systematic approach to finding all real positive solutions of polynomial systems derived from CRNs, it faces significant challenges when applied to large systems. Computing the Gr\"obner basis can become computationally intensive due to the exponential increase in complexity with the number of variables and the degrees of the polynomials involved. This limitation necessitates the use of optimization techniques or alternative methods to make the computation feasible for large-scale networks.

Understanding the conjecture proposed in Subsection \ref{ssec:conjecture}--that the reduced Gr\"obner basis of such systems has a specific structured form--may lead to the development of faster algorithms. By exploiting the inherent structure of the polynomials arising from CRNs, specialized computational strategies could be devised to reduce complexity. While we cannot assert that such methods will universally apply, investigating the structural properties of these systems holds promise for enhancing computational efficiency.

Additionally, structural simplifications in the computation of Gr\"obner bases have been explored in the literature. For instance, research on reaction networks with intermediate species has demonstrated that the Gr\"obner basis of the steady-state ideal of the core network (excluding intermediates) can be extended to the full network (including intermediates) using linear algebra and appropriate monomial orderings \cite{Sadeghimanesh201974}. This approach significantly reduces computation time by decreasing the number of variables and polynomials, leveraging the network's structure to simplify calculations. Such findings underscore the potential benefits of incorporating structural insights into algorithm design to handle larger systems more effectively.

Next, we are going to illustrate as the conjecture is satisfied for two specific CNRs. The reader should be aware that the Gr\"obner basis does not always specialize (in the parameters) well, but it does generically, in the sense that the set of bad parameters is a proper Zariski subset: in particular, it has measure 0. This fact is clear while performing the Buchberger's algorithm.

\section{Application to a Gene Regulatory Network Example}\label{sec:gene_regulatory}

In this section, we aim to illustrate the conjecture presented earlier in the case of a specific gene regulatory network (GRN) \cite{Conradi2017,Siegal-Gaskins2015}. This example serves as a straightforward case where we can explicitly write the structure of the Gr\"obner basis and observe how it aligns with the conjectured form. We introduce the GRN, derive the corresponding system of polynomial equations, compute the Gr\"obner basis, and verify that its structure matches our expectations. This exercise provides initial evidence supporting the conjecture and illustrates the practical application of our method to study how the number of real positive solutions varies with respect to certain parameters.

\subsection{Description of the Gene Regulatory Network}

Consider the following gene regulatory network involving species \( X_1 \), \( X_2 \), \( P_1 \), \( P_2 \), \( X_2 P_1 \), \( P_2 P_2 \), and \( X_1 P_2 P_2 \):

\begin{align*}
X_1 \xrightarrow{\kappa_1} X_1+P_1 \qquad &P_1 \xrightarrow{\kappa_3} 0 \\
X_2 \xrightarrow{\kappa_2} X_2+P_2 \qquad &P_2 \xrightarrow{\kappa_4} 0 \\
X_2+P_1 \xrightleftharpoons[\kappa_6]{\kappa_5} X_2 P_1 \qquad &2 P_2 \xrightleftharpoons[\kappa_8]{\kappa_7} P_2 P_2\\
X_1+P_2 P_2 \xrightleftharpoons[\kappa_{10}]{\kappa_9} &X_1 P_2 P_2
\end{align*}

This network models the interactions between genes \( X_1 \) and \( X_2 \) and their products \( P_1 \) and \( P_2 \), including dimerization and complex formation processes. The reactions involve production, degradation, binding, and unbinding events governed by rate constants \( \kappa_i \).

\subsection{Derivation of the Steady-State Equations}

By applying the law of mass action and setting the time derivatives to zero (steady-state conditions), we obtain the following system of polynomial equations:

\begin{align*}
     &\quad \kappa_{10} x_7 - \kappa_9 x_1 x_6 = 0, \\
     &\quad \kappa_6 x_5 - \kappa_5 x_2 x_3 = 0, \\
     &\quad \kappa_1 x_1 - \kappa_3 x_3 + \kappa_6 x_5 - \kappa_5 x_2 x_3 = 0, \\
     &\quad -2\kappa_7 x_4^2 - \kappa_4 x_4 + \kappa_2 x_2 + 2\kappa_8 x_6 = 0, \\
     &\quad \kappa_5 x_2 x_3 - \kappa_6 x_5 = 0, \\
     &\quad \kappa_7 x_4^2 - \kappa_8 x_6 + \kappa_{10} x_7 - \kappa_9 x_1 x_6 = 0, \\
     &\quad \kappa_9 x_1 x_6 - \kappa_{10} x_7 = 0.
\end{align*}

Here, \( x_i \) represents the concentration of species corresponding to each variable:

\begin{itemize}
    \item \( x_1 \): concentration of \( X_1 \),
    \item \( x_2 \): concentration of \( X_2 \),
    \item \( x_3 \): concentration of \( P_1 \),
    \item \( x_4 \): concentration of \( P_2 \),
    \item \( x_5 \): concentration of \( X_2 P_1 \),
    \item \( x_6 \): concentration of \( P_2 P_2 \),
    \item \( x_7 \): concentration of \( X_1 P_2 P_2 \).
\end{itemize}

Additionally, the system includes conservation laws due to the total amounts of \( X_1 \) and \( X_2 \) (denoted by \( c_1 \) and \( c_2 \)):

\begin{align*}
    &\quad x_2 + x_5 = c_1, \\
    &\quad x_1 + x_7 = c_2.
\end{align*}

\subsection{Computation of the Gr\"obner Basis}

To analyze the system and find all real positive solutions, we compute the reduced Gr\"obner basis of the polynomial system using lexicographic ordering with variable prioritization:

\[
x_7 > x_6 > x_5 > x_4 > x_3 > x_1 > x_2.
\]

The computed reduced Gr\"obner basis is:

\begin{align*}
g_1(x_2): \quad & x_2^3 - c_1 x_2^2 + \left( \frac{\kappa_1 \kappa_4^2 \kappa_5 \kappa_8 \kappa_{10} c_2 + \kappa_3 \kappa_4^2 \kappa_6 \kappa_8 \kappa_{10}}{\kappa_2^2 \kappa_3 \kappa_6 \kappa_7 \kappa_9} \right) x_2 - \frac{\kappa_4^2 \kappa_8 \kappa_{10} c_1}{\kappa_2^2 \kappa_7 \kappa_9} = 0, \\
x_1-g_2(x_2): \quad & x_1 - \left( \frac{\kappa_2^2 \kappa_3 \kappa_6 \kappa_7 \kappa_9}{\kappa_1 \kappa_4^2 \kappa_5 \kappa_8 \kappa_{10}} \right) x_2^2 + \left( \frac{\kappa_2^2 \kappa_3 \kappa_6 \kappa_7 \kappa_9 c_1}{\kappa_1 \kappa_4^2 \kappa_5 \kappa_8 \kappa_{10}} \right) x_2 - c_2 = 0, \\
x_3-g_3(x_2): \quad & x_3 - \left( \frac{\kappa_2^2 \kappa_6 \kappa_7 \kappa_9}{\kappa_4^2 \kappa_5 \kappa_8 \kappa_{10}} \right) x_2^2 + \left( \frac{\kappa_2^2 \kappa_6 \kappa_7 \kappa_9 c_1}{\kappa_4^2 \kappa_5 \kappa_8 \kappa_{10}} \right) x_2 - \frac{\kappa_1 c_2}{\kappa_3} = 0, \\
x_4-g_4(x_2): \quad & x_4 - \left( \frac{\kappa_2}{\kappa_4} \right) x_2 = 0, \\
x_5-g_5(x_2): \quad & x_5 + x_2 - c_1 = 0, \\
x_6-g_6(x_2): \quad & x_6 - \left( \frac{\kappa_2^2 \kappa_7}{\kappa_4^2 \kappa_8} \right) x_2^2 = 0, \\
x_7-g_7(x_2): \quad & x_7 + \left( \frac{\kappa_2^2 \kappa_3 \kappa_6 \kappa_7 \kappa_9}{\kappa_1 \kappa_4^2 \kappa_5 \kappa_8 \kappa_{10}} \right) x_2^2 - \left( \frac{\kappa_2^2 \kappa_3 \kappa_6 \kappa_7 \kappa_9 c_1}{\kappa_1 \kappa_4^2 \kappa_5 \kappa_8 \kappa_{10}} \right) x_2 = 0.
\end{align*}

This structure confirms that the system has been effectively triangularized, allowing for sequential solving starting from \( x_2 \). It also aligns with the central idea in Conjecture \ref{conj}, although the variable priorities here differ from the one proposed in the conjecture.

We tested other variable priorities and consistently found a triangular structure, with no notable impact on computation time or complexity for this example, so those results are not reported here; however, we do explore different variable priorities in the more complex model discussed in Section \ref{sec:wnt}, where we notice significant differences.


\subsection{Analysis of the Number of Real Positive Solutions}

To determine the number of real positive solutions, we focus on solving \( g_1(x_2) = 0 \) and then use back-substitution to find the remaining variables. We fix the rate constants \( \kappa_i \) and consider a grid of positive values \( (c_1, c_2) \), which may represent total concentrations of certain proteins or genes. By varying \( c_1 \) and \( c_2 \), we can simulate how changes in gene expression levels or external stimuli affect the network's behavior, shifting from a unique steady state to multistability. Note that this analysis is missing a parameter sensitivity study: we assume fixed rate constants \( \kappa_i \), while in reality these constants may vary due to environmental factors or mutations.

By analyzing the plots, we can identify regions in the \( (c_1, c_2) \) plane where the system has:
\begin{itemize}
    \item One positive real solution: indicating a unique steady state.
    \item Three positive real solutions: indicating the possibility of multiple steady states, which could correspond to multistability in the biological system.
\end{itemize}
This information is valuable for understanding the conditions under which the gene regulatory network exhibits different dynamic behaviors.

Figure \ref{fig:grn_k} illustrates the regions with different numbers of positive real solutions for two random sets of rate constants \( \kappa \):
\[ \begin{array}{ll}
    & \kappa_1   = 0.551, \kappa_2   = 0.708, \kappa_3   = 0.291, \kappa_4   = 0.511, \kappa_5   = 0.893, \\
    &\kappa_6   = 0.896, \kappa_7   = 0.126, \kappa_8   = 0.207, \kappa_9   = 0.052, \kappa_{10}= 0.441 \qquad 
    \end{array}\text{(left)}
\]
\[\begin{array}{ll}
    &\kappa_1   = 0.815, \kappa_2   = 0.906, \kappa_3   = 0.127, \kappa_4   = 0.913, \kappa_5   = 0.632, \\
    &\kappa_6   = 0.098, \kappa_7   = 0.279, \kappa_8   = 0.547, \kappa_9   = 0.958, \kappa_{10}= 0.965 \qquad 
    \end{array}\text{(right)}.
\]

\begin{figure}
\centering
\includegraphics[width = 0.48\textwidth]{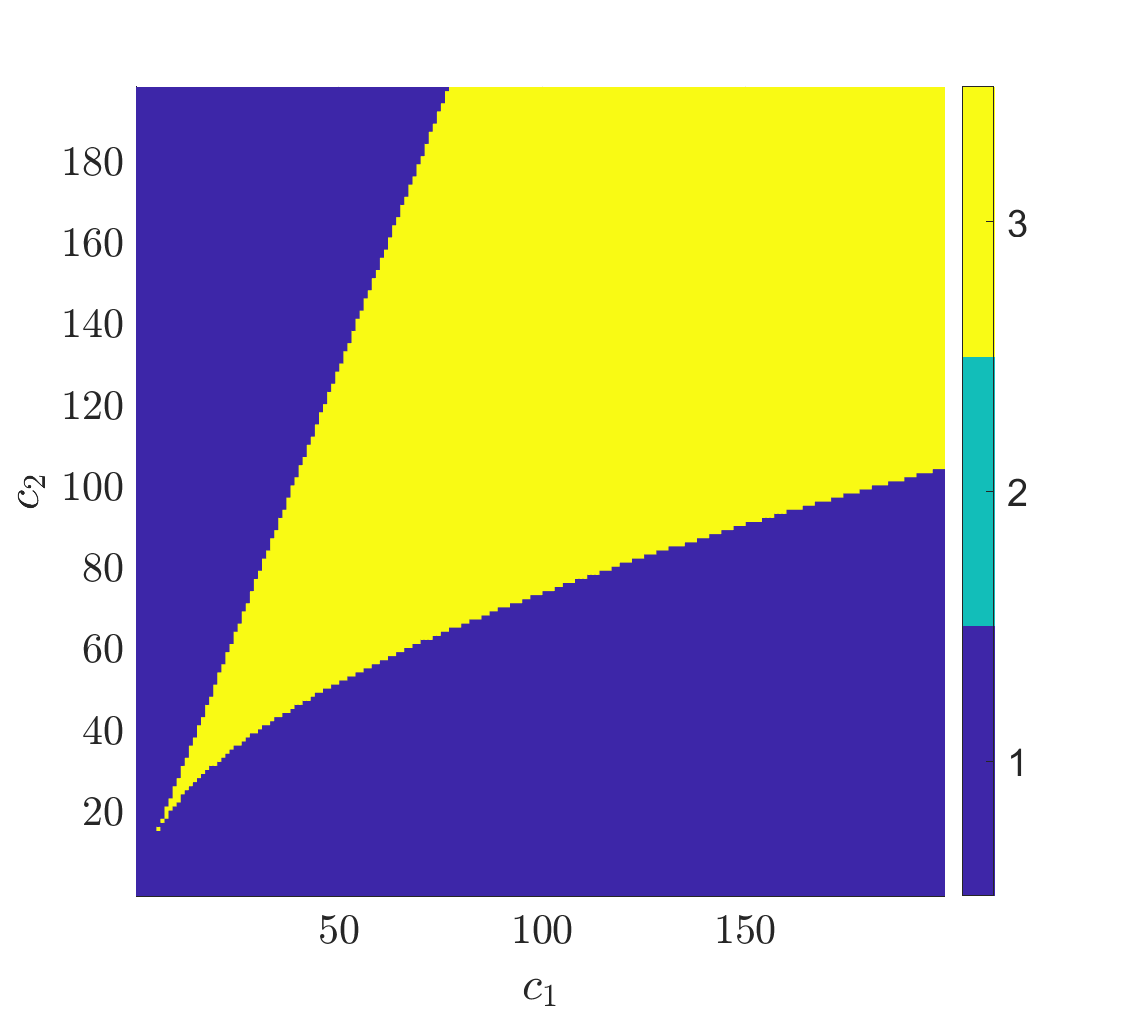}
\includegraphics[width = 0.47\textwidth]{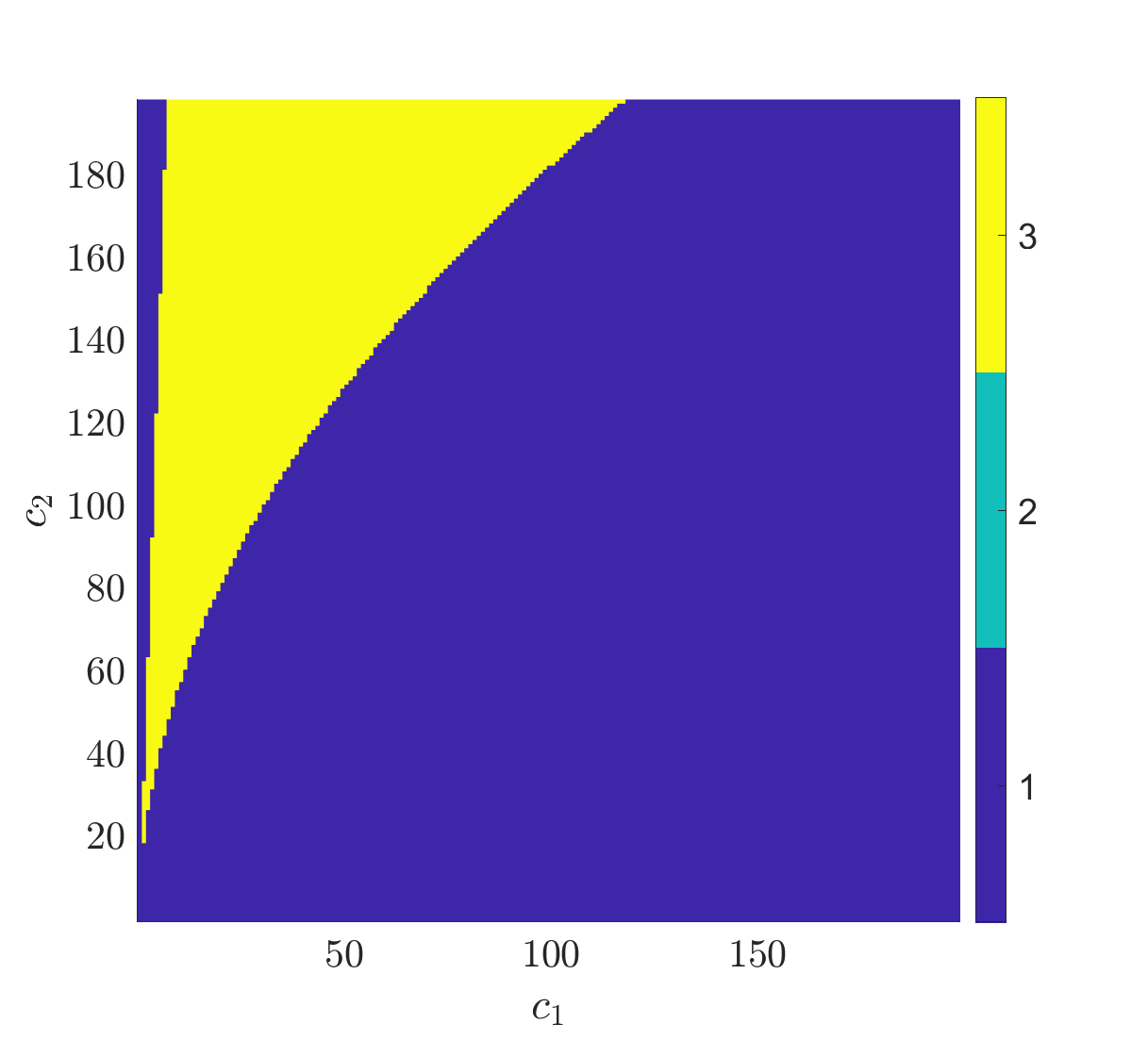}
\caption{Number of positive real solutions for two random \( \kappa \) vectors.}\label{fig:grn_k}
\end{figure}

\section{Application to the Wnt signaling pathway}\label{sec:wnt}

The Wnt signaling pathway is a highly conserved regulatory network that plays a crucial role in embryonic development, tissue homeostasis, and cell fate determination \cite{Gross201621,Maclean2016405}. Dysregulation of this pathway is implicated in various diseases, including cancer. Understanding its steady-state behavior is therefore of fundamental interest. In this section, we demonstrate how the Gr\"obner basis methodology introduced earlier can be employed to systematically determine all real positive steady states of a polynomial system derived from a Wnt pathway model under mass-action kinetics.

The simplified Wnt model is composed by 19 distinct species whose concentrations and interactions define the system dynamics.
Each variable $x_i$ in the model corresponds to the concentration of one such species. 
These include various states of central pathway components - dishevelled, the destruction complex, phosphatase, and $\beta$-catenin - distributed across different cellular compartments, as well as transcription factors and intermediate complexes formed during signaling:
\begin{itemize}
\item Dishevelled states ($x_1$, $x_2$, $x_3$) are present in the cytoplasm in both inactive ($x_1$) and active ($x_2$) forms. It can also exist in an active nuclear form ($x_3$). Transition among these states and compartments is integral to pathway regulation, particularly after Wnt stimulation.
\item The destruction Complex ($x_4$, $x_5$, $x_6$, $x_7$) is composed of APC, Axin, and GSK3$\beta$, and is crucial for controlling $\beta$-catenin levels. It cycles between active and inactive forms and can reside in the cytoplasm ($x_4$ active, $x_5$ inactive) or the nucleus ($x_6$ active, $x_7$ inactive). 
\item Phosphatase ($x_8$, $x_9$) modifies the destruction complex and other proteins. It is present in both the cytoplasm ($x_8$) and the nucleus ($x_9$).
\item $\beta$-catenin ($x_{10}$, $x_{11}$) is a key effector of Wnt signaling, and is found in the cytoplasm ($x_{10}$) and can shuttle to the nucleus ($x_{11}$).
\item Transcription Factor ($x_{12}$) and Transcription Complex ($x_{13}$): Within the nucleus, TCF ($x_{12}$) can bind to $\beta$-catenin to form a transcriptionally active complex ($x_{13}$), which directly influences Wnt target gene expression.
\item Intermediate Complexes ($x_{14}$ through $x_{19}$): Several transient complexes arise during signal transduction. 
\end{itemize}

By monitoring these 19 species and their 31 associated reactions—each governed by a distinct mass-action rate constant $k_i$—the Wnt shuttle model represents the interplay between extracellular Wnt signals, cytoplasmic and nuclear responses, and gene transcription. The resulting polynomial ODE system captures the pathway’s dynamic behavior \cite{Gross201621}.

By writing down the mass-action ODEs for each species and setting the time derivatives to zero, we obtain a system of polynomial equations. Additionally, the conservation laws are included as polynomial constraints:
\begin{equation}\label{eq:wnt_cons_laws}
\begin{split}
& c_{1}=x_{1}+x_{2}+x_{3}+x_{14}+x_{15} \\
& c_{2}=x_{4}+x_{5}+x_{6}+x_{7}+x_{14}+x_{15}+x_{16}+x_{17}+x_{18}+x_{19} \\
& c_{3}=x_{8}+x_{16} \\
& c_{4}=x_{9}+x_{17} \\
& c_{5}=x_{12}+x_{13}
\end{split}
\end{equation}

According to the algorithm in Subsection \ref{ssec:algorithm}, the key to finding all real positive solutions is to compute the reduced Gr\"obner basis of the ideal generated by the steady-state equations. As before, we use a lexicographic monomial ordering. The lex order helps in obtaining a triangularized system after the Gr\"obner basis computation. 

However, using different variable priorities within the lexicographic order may impact the time required for computing the Gr\"obner basis. To demonstrate this we run the following experiment. We considered the WNT signaling pathways and we defined the network parameters (namely the rate constants $\kappa$ and the conservation laws constants $c_i$) as in Theorem 4.1 of \cite{Gross201621} in order to obtain multistability.  Only the constant $c_4$ associated to the fourth conservation law in Eq. (\ref{eq:wnt_cons_laws}) was considered as parameter. Fig. \ref{fig:CPU_times} shows the computational time required for computing the Gr\"obner basis for 50 different variable priorities listed in the first column of Table \ref{tab:CPU_times}, Appendix \ref{sec:App}. In 54\% of the considered cases the Gr\"obner basis was computed in less than 100 seconds, while for 16\% of the cases it took more than 2 hours with a maximum of about 9.5 hours. A further analysis, not shown here, demonstrates that the variable priorities associated with a higher computational time return Gr\"obner bases whose polynomial coefficients exhibit a more intricate dependence on the parameter $c_4$.

By scanning across different parameter values, one can identify conditions under which the system transitions from having a single stable positive steady state to multiple steady states. Such bistability or multistability could have biological significance, for instance, in representing switch-like behavior that controls cell fate decisions. In particular, in Figure \ref{fig:wnt_n_sol}, we can identify regions in the \( (c_1, c_2) \) plane and in the \( (c_3, c_4) \) plane where the system has one positive real solution, indicating a unique steady state, or three positive real solutions, which could correspond to multistability in the biological system.

Finally, we highlight that for the Wnt pathway, all computed Gr\"obner bases indeed conform to the structural pattern anticipated by Conjecture \ref{conj}.

\begin{figure}
\centering
\includegraphics[width = 0.47\textwidth]{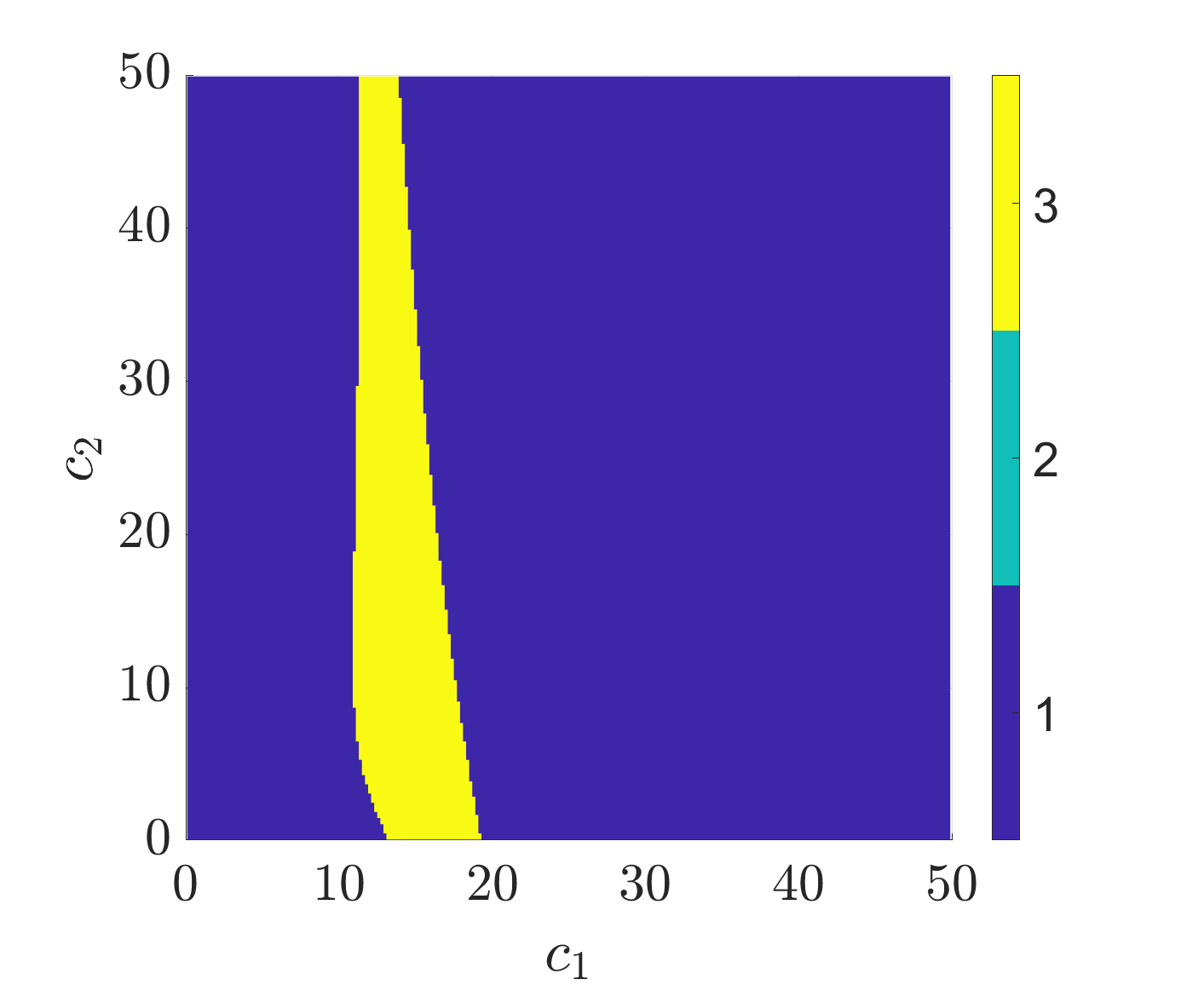}
\includegraphics[width = 0.49\textwidth]{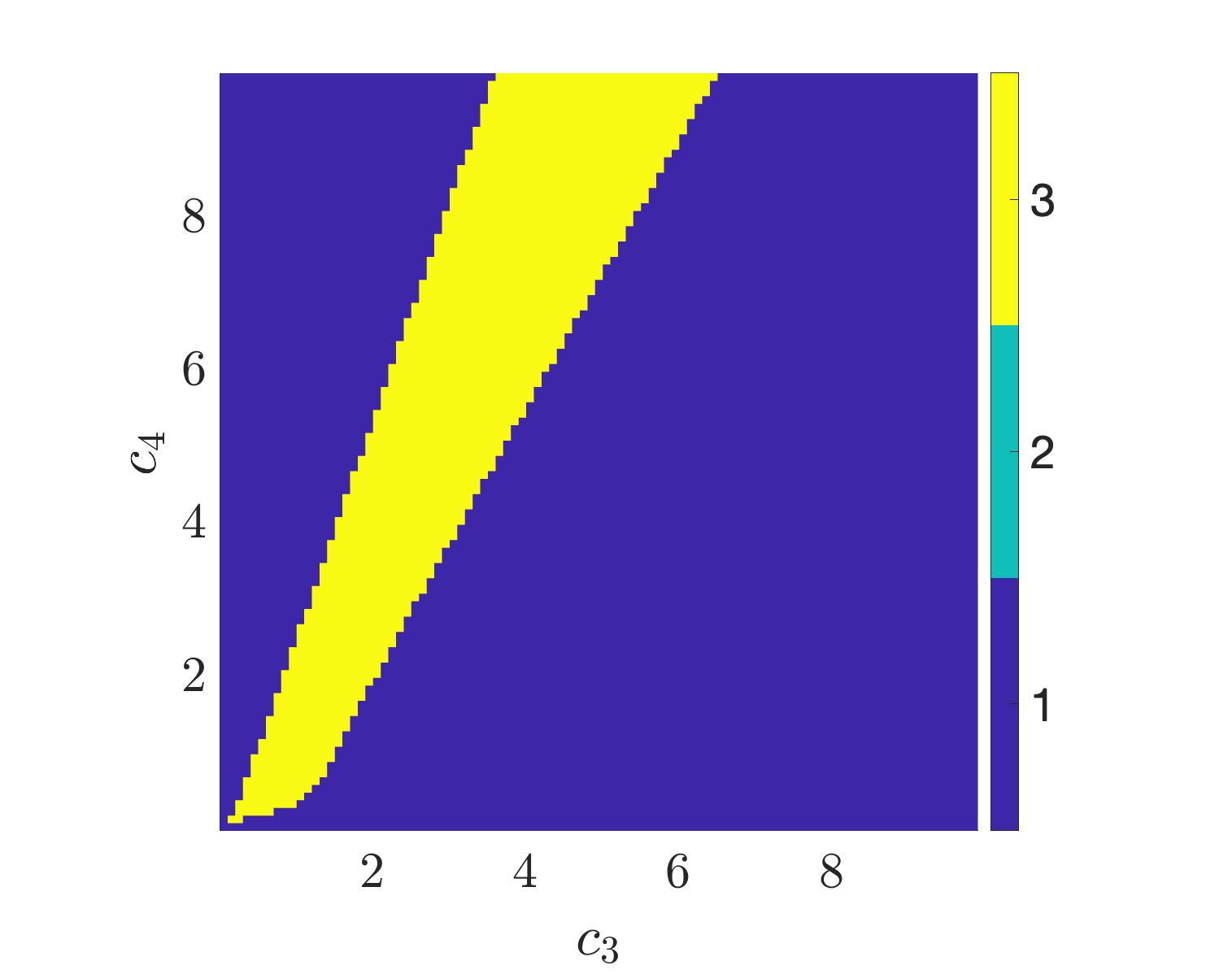}
\caption{Wnt pathway: number of positive real solutions varying $c_1$ and $c_2$ (left) and $c_3$ and $c_4$ (right).}\label{fig:wnt_n_sol}
\end{figure}

\begin{figure}
\centering
\includegraphics[width = 0.8\textwidth]{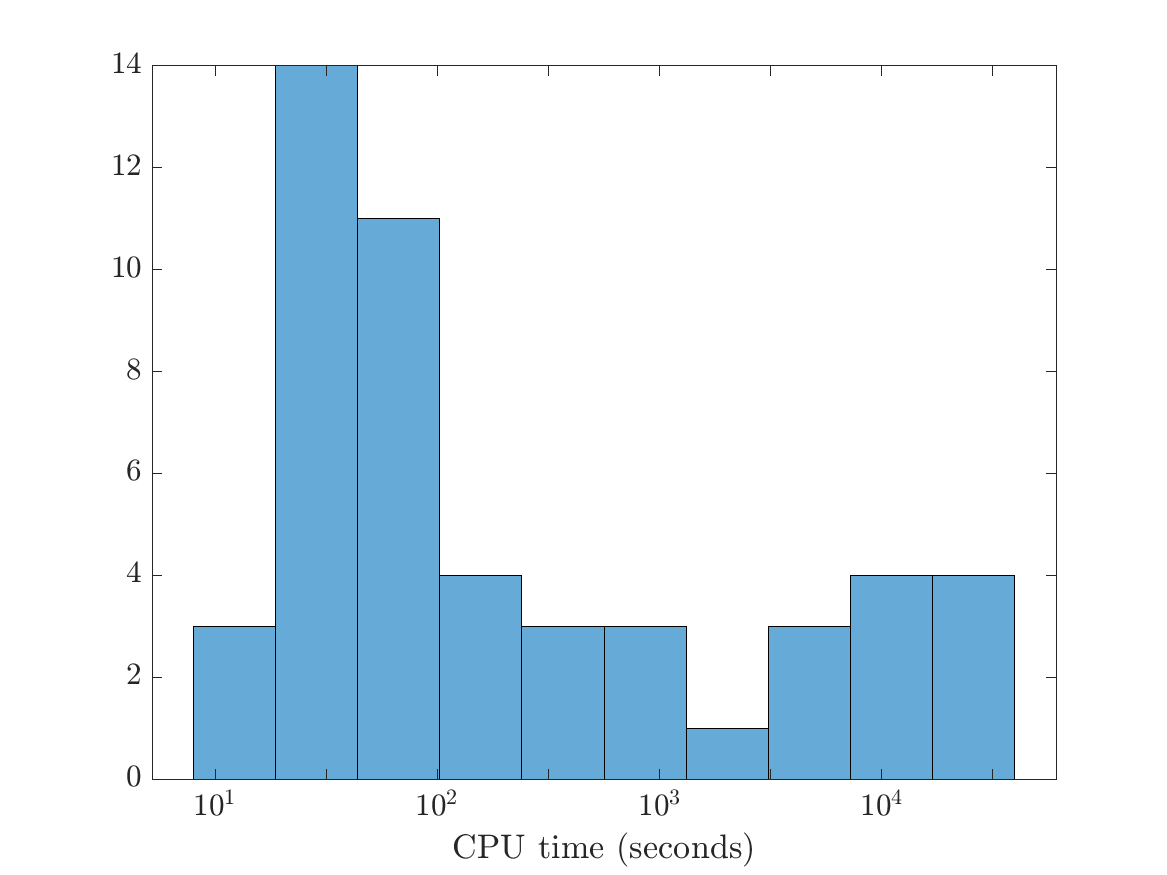}
\caption{Histogram for the CPU times in logarithmic scale for the computation of the Gr\"obner basis with symbolic $c_4$ with lexicographic ordering varying the variable priorities.}\label{fig:CPU_times}
\end{figure}

\section{Conclusions and future work}\label{sec:conclusions}
This work demonstrates how Gr\"obner bases can systematically reveal all real positive steady states in polynomial systems arising from chemical reaction networks governed by mass-action kinetics. By focusing on a lexicographic ordering, we achieve a triangular form that permits sequential solution. The conjectured structure of the reduced Gr\"obner basis aligns well with the tested networks in \cite{M2Repository} such as gene regulation and Wnt signaling models.

In future research, we will refine the conditions under which the conjectured triangular Gr\"obner basis structure holds and streamline the present Jacobian‑criterion condition, thereby establishing a rigorous but readily deployable framework for this structured Gr\"obner approach. We also aim to accelerate Gr\"obner basis computations by leveraging the knowledge on the Gr\"obner basis structure and by defining a variable order that respects each network’s stoichiometric organization--for instance, placing a ``key'' species last so that the associated variable is the one that appears in the univariate polynomial.

In addition, we plan to explore large-scale biological networks and investigate the numerical stability of parametric Gr\"obner bases, seeking parameter sets that guarantee unique solutions. Such uniqueness is particularly valuable for developing a robust drug-response simulator, where the effects of a therapeutic intervention can be reliably predicted based on well-defined system parameters, continuing the work in \cite{Sommariva2023}.

\section*{Declarations}

\subsection*{Acknowledgments and Funding}
This work was carried out within the framework of the project HubLife Science –
Digital Health (LSH-DH) PNC-E3-2022-23683267 - Progetto DHEAL-COM – CUP:D33C22001980001,
founded by Ministero della Salute within ``Piano Nazionale Complementare al PNRR Ecosistema Innovativo della Salute - Codice univoco investimento: PNC-E.3''. The research was supported in part by the MUR Excellence Department Project awarded to Dipartimento di Matematica, Università di Genova, CUP D33C23001110001. FB, PF, MP and SS are members of “Gruppo Nazionale per il Calcolo Scientifico" (INdAM-GNCS). MV is supported by the Italian PRIN2020, Grant number 2020355B8Y, ``Squarefree Gr\"obner
degenerations, special varieties and related topics''. 

\subsection*{Conflict of interest}
On behalf of all authors, the corresponding author states that there is no conflict of interest.

\subsection*{Author contribution}
FB PF SS MP designed the study. PF run the simulations. MV PF and SS interpreted the results and wrote the original draft. All the authors revised and wrote the final version of the manuscript.

\begin{appendices}
\section{CPU times analysis}\label{sec:App}
In this appendix, we provide the raw data collected for the computational times discussed in Section \ref{sec:wnt}. All computations were performed using Macaulay2 on the University of Melbourne’s cloud-based server.

\begin{table}[]
\footnotesize
    \centering
    \begin{tabular}{l|r}
         Variable Priorities & CPU Times (s) \\
         \hline
x2,x9,x18,x13,x7,x4,x5,x12,x10,x1,x14,x15,x3,x19,x8,x16,x17,x6,x11          &          2934.33           \\
x18,x3,x15,x4,x9,x11,x5,x17,x7,x2,x13,x10,x6,x8,x12,x1,x16,x14,x19          &          5082.17           \\
x6,x16,x19,x15,x7,x5,x4,x8,x18,x12,x2,x10,x3,x11,x1,x17,x14,x13,x9          &          1293.69           \\
x5,x17,x3,x15,x1,x8,x11,x7,x2,x14,x16,x9,x13,x4,x10,x18,x19,x6,x12      	&			22154.60         \\
x12,x13,x3,x2,x7,x10,x9,x16,x6,x15,x1,x14,x18,x17,x11,x4,x19,x8,x5       	&			  74.99             \\
x1,x7,x4,x19,x6,x11,x14,x13,x8,x16,x2,x18,x12,x10,x17,x5,x3,x9,x15          &            16.59             \\
x10,x6,x19,x3,x11,x13,x8,x14,x15,x2,x5,x7,x17,x4,x1,x16,x18,x9,x12          &            17113.00       \\
x17,x5,x18,x19,x10,x11,x1,x7,x4,x13,x12,x3,x8,x2,x16,x9,x6,x15,x14       	&			  12.37             \\
x5,x14,x11,x7,x13,x1,x16,x4,x9,x15,x19,x8,x18,x12,x10,x6,x2,x3,x17          &            24.48             \\
x18,x6,x7,x4,x14,x9,x19,x12,x8,x10,x3,x5,x15,x13,x11,x1,x17,x2,x16       	&			  21.23             \\
x5,x13,x8,x17,x6,x9,x15,x2,x16,x14,x10,x1,x3,x12,x19,x7,x18,x4,x11          &          8616.78           \\
x7,x8,x6,x19,x18,x4,x17,x10,x9,x14,x5,x12,x11,x16,x15,x13,x1,x3,x2          &           387.05            \\
x6,x15,x13,x1,x5,x10,x11,x8,x18,x4,x2,x9,x17,x12,x14,x3,x19,x7,x16          &            53.95             \\
x8,x1,x16,x10,x14,x17,x11,x18,x9,x6,x4,x19,x15,x5,x7,x2,x13,x3,x12          &            20084.30       \\
x19,x8,x14,x15,x11,x10,x5,x9,x13,x3,x18,x12,x4,x2,x1,x16,x17,x6,x7          &            54.24             \\
x9,x11,x2,x17,x18,x19,x4,x16,x12,x1,x13,x15,x7,x5,x6,x8,x3,x10,x14          &            16.67             \\
x6,x7,x13,x12,x16,x2,x8,x3,x18,x1,x5,x19,x10,x4,x15,x14,x17,x11,x9          &            79.59             \\
x12,x4,x6,x16,x13,x17,x10,x8,x15,x19,x1,x2,x7,x9,x14,x3,x5,x11,x18          &            14172.50       \\
x3,x4,x18,x6,x1,x17,x5,x19,x15,x14,x12,x10,x11,x16,x2,x7,x8,x9,x13          &            34360.30       \\
x12,x11,x18,x19,x1,x6,x8,x4,x7,x13,x5,x2,x16,x10,x14,x3,x9,x15,x17          &            23.88             \\
x3,x9,x12,x16,x2,x15,x10,x8,x18,x13,x4,x6,x19,x7,x11,x14,x1,x5,x17          &            19.13             \\
x19,x7,x8,x5,x2,x3,x1,x11,x18,x6,x17,x13,x16,x10,x14,x12,x4,x15,x9          &           388.04            \\
x16,x4,x9,x13,x15,x7,x3,x2,x19,x5,x12,x17,x18,x8,x11,x10,x1,x14,x6          &           138.06            \\
x7,x17,x15,x9,x6,x12,x19,x18,x5,x1,x10,x14,x11,x3,x13,x2,x8,x16,x4       	&			 102.75            \\
x2,x4,x15,x5,x17,x18,x7,x3,x10,x13,x1,x8,x14,x11,x12,x6,x19,x16,x9          &			  78.08             \\
x6,x8,x9,x17,x3,x16,x2,x19,x13,x10,x5,x7,x4,x12,x11,x14,x18,x15,x1          &            94.63             \\
x7,x3,x5,x6,x8,x9,x18,x10,x19,x13,x14,x12,x11,x2,x17,x4,x16,x1,x15          &			  28.34             \\
x9,x17,x10,x11,x15,x8,x1,x12,x6,x13,x2,x3,x14,x18,x5,x7,x19,x16,x4          &           239.29            \\
x5,x15,x3,x14,x19,x6,x16,x8,x13,x10,x7,x17,x9,x4,x1,x12,x18,x11,x2          &           100.70            \\
x3,x10,x14,x16,x1,x18,x9,x7,x17,x5,x2,x19,x4,x13,x15,x12,x6,x8,x11          &          4513.55           \\
x12,x1,x18,x6,x9,x10,x5,x19,x11,x16,x14,x4,x17,x7,x13,x3,x2,x15,x8          &			  20.40             \\
x11,x4,x9,x17,x3,x8,x10,x13,x15,x5,x2,x1,x12,x6,x18,x19,x14,x16,x7          &            68.41             \\
x12,x17,x15,x18,x11,x3,x4,x14,x1,x9,x6,x19,x2,x5,x10,x8,x13,x7,x16          &           307.31            \\
x19,x1,x6,x11,x10,x4,x14,x13,x17,x18,x12,x7,x15,x9,x5,x16,x2,x8,x3			&			  34.28             \\
x13,x11,x4,x10,x14,x6,x8,x7,x2,x18,x15,x3,x12,x16,x1,x19,x5,x17,x9          &            42.33             \\
x4,x12,x5,x14,x8,x6,x18,x13,x3,x7,x10,x9,x1,x19,x11,x2,x17,x16,x15          &            24.22             \\
x10,x13,x3,x7,x8,x11,x17,x5,x2,x16,x4,x9,x14,x12,x18,x6,x15,x1,x19          &          7014.19           \\
x15,x10,x11,x19,x4,x3,x5,x13,x18,x2,x6,x14,x12,x16,x9,x17,x1,x8,x7          &            25.34             \\
x8,x14,x18,x6,x9,x17,x5,x3,x4,x11,x12,x19,x1,x2,x7,x15,x16,x13,x10          &          9120.25           \\
x5,x15,x2,x1,x9,x3,x13,x4,x17,x14,x7,x11,x18,x6,x19,x12,x16,x10,x8          &           229.22            \\
x18,x8,x5,x15,x1,x3,x14,x13,x2,x19,x11,x7,x4,x10,x17,x12,x9,x6,x16          &            29.73             \\
x6,x9,x8,x13,x4,x2,x7,x5,x3,x19,x16,x10,x11,x1,x15,x14,x12,x18,x17          &           834.58            \\
x10,x4,x7,x11,x12,x15,x18,x6,x13,x19,x2,x16,x1,x9,x5,x14,x8,x17,x3          &            37.23             \\
x15,x13,x3,x11,x4,x7,x9,x8,x18,x10,x19,x5,x12,x6,x16,x17,x2,x14,x1          &            30.28             \\
x16,x1,x14,x6,x8,x12,x19,x10,x3,x4,x9,x11,x18,x2,x13,x5,x7,x15,x17          &            90.51             \\
x1,x10,x13,x5,x19,x6,x17,x16,x4,x12,x18,x11,x7,x8,x3,x14,x15,x2,x9          &            20.40             \\
x16,x15,x18,x1,x7,x10,x4,x3,x13,x2,x14,x9,x5,x12,x19,x11,x17,x8,x6          &            56.25             \\
x16,x12,x15,x10,x17,x5,x18,x7,x8,x3,x1,x19,x2,x13,x9,x4,x14,x11,x6          &          8789.40           \\
x1,x17,x13,x19,x16,x8,x2,x3,x7,x15,x4,x14,x11,x12,x10,x9,x18,x6,x5			&			  63.60             \\
x8,x2,x10,x14,x1,x16,x15,x3,x6,x11,x9,x12,x18,x5,x17,x13,x19,x4,x7          &           963.57            \\
    \end{tabular}
    \caption{CPU times for Gr\"obner basis computations with symbolic \(c_4\) under different lexicographic monomial orderings, where each ordering ranks the variables in descending priority.}
    \label{tab:CPU_times}
\end{table}
\end{appendices}

\bibliography{references}

\end{document}